\DeclarePairedDelimiterX\abs[1]{\lvert}{\rvert}{#1}
\DeclarePairedDelimiterX\parn[1]{(}{)}{#1}
\DeclarePairedDelimiterX\set[1]{\lbrace}{\rbrace}{#1}
\DeclarePairedDelimiterX\innerp[2]{\langle}{\rangle}{#1,#2}
\DeclarePairedDelimiterX\norm[1]{\lVert}{\rVert}{#1}
\DeclarePairedDelimiterX\brak[1]{\lbrace}{\rbrace}{#1}
\DeclarePairedDelimiterX\coeff[1]{(}{)}{#1}
\renewcommand{\hat}[1]{\widehat{#1}}
\newcommand{\untsph}{\mathbb{S}^{2}}
\newcommand{\lsph}{L^{2}(\untsph)}
\newcommand{\lSO}{L^{2}(\mathbb{SO}(3))}
\newcommand{\bv}[1]{\boldsymbol{#1}}
\newcommand{\unit}[1]{\bv{\hat{#1}}}
\newcommand{\dfn}{\triangleq}
\newcommand{\D}{\mathcal{D}}
\newcommand{\secref}[1]{Section\,\ref{#1}}
\newcommand{\figref}[1]{Fig.\,\ref{#1}}
\newcommand{\B}[1]{\mathbf{#1}}
\newcommand{\E}{\mathbb{E}}
\newcommand{\mse}{\mathcal{E}_{ms}}
\newcommand{\se}{\mathcal{E}_{s}}
\newcommand{\SO}{\mathbb{SO}(3)}
\newcommand{\snr}{\mathrm{SNR}}
\newcommand{\innerpS}[2]{\innerp{#1}{#2}_{\untsph}}
\newcommand{\innerpSO}[2]{\innerp{#1}{#2}_{\SO}}
\newcommand{\normS}[1]{\norm{#1}_{\untsph}}
\newcommand{\normSO}[1]{\norm{#1}_{\SO}}
\newcommand{\displayskipshrink}{%
	\setlength{\abovedisplayskip}{3.0pt plus 1.0pt minus 1.0pt}
	\setlength{\abovedisplayshortskip}{0pt plus 1.0pt minus 1.0pt}
	\setlength{\belowdisplayskip}{3.0pt plus 1.0pt minus 1.0pt}
	\setlength{\belowdisplayshortskip}{3.0pt plus 1.0pt minus 1.0pt}}
\newtheorem{theorem}{Theorem}
\title{Joint $\SO$-Spectral Domain Filtering of Spherical Signals in the Presence of Anisotropic Noise}
\author{Adeem Aslam, \IEEEmembership{Student Member, IEEE} and Zubair Khalid, \IEEEmembership{Senior Member, IEEE}
	\thanks{Authors are with the School of Science and Engineering, Lahore University of Management Science, Lahore, Pakistan (e-mail: adeem.aslam@lums.edu.pk, zubair.khalid@lums.edu.pk).
}}
\begin{document}

\displayskipshrink

\maketitle

\begin{abstract}
We present a joint $\SO$-spectral domain filtering framework using directional spatially localized spherical harmonic transform (DSLSHT), for the estimation and enhancement of random anisotropic signals on the sphere contaminated by random anisotropic noise. We design an optimal filter for filtering the DSLSHT representation of the noise-contaminated signal in the joint $\SO$-spectral domain. The filter is optimal in the sense that the filtered representation in the joint domain is the minimum mean square error estimate of the DSLSHT representation of the underlying (noise-free)~source signal. We also derive a least square solution for the estimate of the source signal from the filtered representation in the joint domain. 
We demonstrate the capability of the proposed filtering framework using the Earth topography map in the presence of anisotropic, zero-mean, uncorrelated Gaussian noise, and compare its performance with the joint spatial-spectral domain filtering framework.
\end{abstract}

\begin{IEEEkeywords}
	$2$-sphere, spherical harmonics, $\SO$ rotation group, DSLSHT, anisotropic process.
\end{IEEEkeywords}

\section{Introduction} \label{sec:intro}
Spherical signals have inherent angular dependence and are naturally encountered in many areas of science and engineering such as wireless communication~\cite{bashar:2016,Alem:2015}, computer graphics~\cite{nadeem2016spherical,Lam:2006}, medical imaging~\cite{Bates:2016}, acoustics~\cite{Zhang:2012,Bates:2015}, quantum chemistry~\cite{Johansson:2017}, quantum mechanics~\cite{grinter2018}, geodesy~\cite{Wieczorek:2007}, planetary sciences~\cite{Audet:2014,Hippel:2019}, astronomy~\cite{Starck:2006,Jarosik:2011,Galanti:2019} and cosmology~\cite{Spergel:2007,Dahlen:2008,Marinucci:2008,McEwen:2012,Nadathur:2016}, to name a few. Spherical observations in most of these fields are marred with unwanted, yet unavoidable, noise due to the presence of different sources of interference. In this context, we address the problem of estimating/recovering signals on the sphere which are contaminated by random anisotropic noise.

Many noise removal techniques, with different assumptions and constraints, have been proposed in the literature~\cite{McEwen:2008,McEwen:2012,sasgen2006wiener,arora2010optimal}. These methods process signals in either spatial or spectral domain and assume the signal and/or noise to be a realization of an isotropic random process on the sphere. To estimate signals contaminated by anisotropic noise, a minimum mean square error filter is developed in~\cite{klees2008design}, and zero-forcing and minimum mean square error criterion is adopted in~\cite{sadeghi2014} using linear operators for equalizing linear distortions and anisotropic noise. However, these methods do not carry out signal estimation in the joint spatial-spectral domain.

Motivated by the idea of filtering non-stationary processes in the joint time-frequency domain proposed in~\cite{hlawatsch:2000}, an optimal filter in the joint spatial-spectral domain has been proposed in \cite{Khalid:2013spie} for the estimation of spherical signals contaminated by zero-mean, anisotropic noise\footnote{Anisotropic processes on the sphere are analogues of non-stationary processes in the Euclidean domain.}. However, the resulting filter performs spatially varying filtering of the spectral content using axisymmetric window signals, and is therefore, not suitable for the recovery of directional features in the underlying signal.

The framework developed in this work is a novel contribution towards signal estimation on the sphere, aimed at recovering directional features, in the presence of random anisotropic noise. Before formulating the framework in \secref{sec:filtering}, we briefly review the mathematical background and formally state the problem in \secref{sec:Math}. In \secref{sec:ana}, we illustrate the utility of the proposed filtering framework on the bandlimited Earth topography map and compare the results with the joint spatial-spectral domain filtering framework~\cite{Khalid:2013spie}, before making concluding remarks in \secref{sec:con}.

\section{Preliminaries and Problem Formulation}\label{sec:Math}

\subsection{Signals on 2-Sphere}
Surface of the $2$-sphere~(or sphere), denoted by $\untsph$, is defined as $\untsph \dfn \{\unit{x} \in \mathbb{R}^3 \, : \, |\unit{x}| = 1\}$, where $\unit{x} \equiv \unit{x}(\theta,\phi) = (\sin\theta\cos\phi,\sin\theta\sin\phi,\cos\theta)^\mathrm{T}$ is parameterized by colatitude $\theta \in [0,\pi]$, measured from the positive $z$-axis, and longitude $\phi \in [0,2\pi)$, measured from the positive $x$-axis in the $x-y$ plane, $(\cdot)^\mathrm{T}$ represents the vector transpose and $|\cdot|$ denotes the Euclidean norm. Square-integrable and complex-valued functions, of the form $f(\unit{x}) \equiv f(\theta,\phi)$, defined over $\untsph$ form a Hilbert space, denoted by $\lsph$, that is equipped with the following inner product for two functions $f,h\in\lsph$
\begin{equation*}
\innerpS{f}{h} \dfn\! \int_{\untsph} f(\unit{x}) \overline{h(\unit{x})} ds(\unit{x}) = \!\int_{\theta=0}^{\pi} \int_{\phi=0}^{2\pi}\!\!f(\unit{x}) \overline{h(\unit{x})} ds(\unit{x})
\end{equation*}
where $ds(\unit{x}) = \sin\theta d\theta d\phi$ and $\overline{( \cdot )}$ represents the complex conjugate operation. This inner product induces a norm $\norm{f} \dfn \innerpS{f}{f}^{1/2}$. The Hilbert space $\lsph$ has a complete set of orthonormal basis functions called spherical harmonics, which are denoted by $Y_{\ell}^m(\unit{x})$ for integer degree $\ell \geq 0$ and integer order $-\ell \le m \leq \ell$~\cite{Kennedy-book:2013}. Any signal $f\in\lsph$ can then be expressed as
\begin{equation}
f(\unit{x}) = \sum\limits_{\ell = 0}^{\infty} \sum\limits_{m = -\ell}^{\ell} (f)_{\ell}^m Y_{\ell}^{m}(\unit{x})= \sum\limits_{n}^{\infty}  (f)_n Y_n(\unit{x}),
\label{eq:Fourier_expansion}
\end{equation}
where $\sum\limits_{n}^{\infty} \equiv \sum\limits_{\ell = 0}^{\infty} \sum\limits_{m = -\ell}^{\ell}$ with $n = \ell(\ell+1)+m$, and $(f)_n = \innerpS{f}{Y_n} = \innerpS{f}{Y_\ell^m}$ is the spectral coefficient of degree $\ell = \lfloor\sqrt{n}\rfloor$ and order $m = n - \ell(\ell+1)$, which constitutes the spectral domain representation of the  signal $f$. Signal $f \in \lsph$ is considered bandlimited to degree $L$ if $(f)_{\ell}^m = 0$ for $\ell, |m| \geq L$, or equivalently $(f)_n = 0$ for $n > L^2$. For such a signal, the sum over degree in~\eqref{eq:Fourier_expansion} is truncated at $L-1$.

\subsection{Signals on $\SO$ Rotation Group}
We define rotations on the sphere by three Euler angles, namely $\omega \in [0,2\pi)$ around $z$-axis, $\vartheta \in [0,\pi]$ around $y$-axis and $\varphi \in [0,2\pi)$ around $z$-axis, using the right handed $zyz$ convention~\cite{Kennedy-book:2013}. Group of all such proper rotations\footnote{An improper rotation is a reflection or a flip about either one of the axes or the center of the coordinate system.} is called the special orthogonal group, denoted by $\SO$, in which each point is represented by a $3$-tuple of the Euler angles as $\rho \equiv (\varphi,\vartheta,\omega)$. Square-integrable and complex-valued functions defined over the $\SO$ rotation group form a Hilbert space $\lSO$, which is equipped with the following inner product for two functions $g, \nu\in\lSO$
\begin{equation*}
\innerpSO{g}{\nu}\! \dfn\!\!\!\!\! \int\limits_{\SO}\!\!\!\!\! g(\rho) \overline{\nu(\rho)} d\rho \!= \!\!\!\!\int\limits_{\varphi=0}^{2\pi} \int\limits_{\vartheta=0}^{\pi} \int\limits_{\omega=0}^{2\pi} \!\!\!g(\rho) \overline{\nu(\rho)} d\varphi \sin\vartheta d\vartheta d\omega.
\end{equation*}
For the Hilbert space $\lSO$, Wigner-$D$ functions, denoted by $D^{\ell}_{m,m'}(\varphi,\vartheta,\omega)$ for integer degree $\ell \geq 0$ and orders $-\ell \le m,m' \leq \ell$, form a complete set of orthogonal basis functions~\cite{Kennedy-book:2013}. Any signal $g\!\in\!\lSO$ can be expanded as
\begin{align}
g(\rho) = \sum_{\ell=0}^{\infty} \sum_{m=-\ell}^{\ell} \sum_{m'=-\ell}^{\ell} (g)^{\ell}_{m,m'} D^{\ell}_{m,m'}(\rho), 
\label{eq:WignerD_expansion}
\end{align}
where
$(g)^{\ell}_{m,m'} = \left(\frac{2\ell+1}{8\pi^2}\right)\innerpSO{g}{D^{\ell}_{m,m'}}$ is the spectral coefficient of degree $\ell \ge 0$ and orders $-\ell \le m, m' \le \ell$, and constitutes the spectral domain representation of the signal $g$. Signal $g \in \lSO$ is considered bandlimited to degree $L$ if $(g)^{\ell}_{m,m'} = 0$ for $\ell,|m|,|m'| \ge L$.

\subsection{Problem Under Consideration}
\label{sec:PS}
Let $s(\unit{x})$ be a realization of an anisotropic random process on the sphere, called the source signal, which is contaminated by a realization of an anisotropic, zero-mean, random noise process, $z(\unit{x})$, to give $f(\unit{x}) = s(\unit{x}) + z(\unit{x})$ as an observation on the sphere. The objective is to determine an estimate of the source signal, denoted by $\tilde{s}(\unit{x})$, which is optimal in the mean square sense. We assume that noise is uncorrelated with the source signal\footnote{We refer to this assumption by simply stating ``uncorrelated noise".}, i.e., $\E\{s(\unit{x}) \overline{z(\unit{x})}\} = \E\{z(\unit{x})\overline{s(\unit{x})}\} = \E\{(s)_n \overline{(z)_{n'}}\} = \E\{(z)_n \overline{(s)_{n'}}\} = 0$, where $\E\{\cdot\}$ denotes the expectation operator, and spectral covariance matrices for the signal and noise processes, denoted by $\B{C}^s$ and $\B{C}^z$ respectively, are known.

\section{Joint $\SO$-Spectral Domain Filtering}
\label{sec:filtering}
Before presenting the joint $\SO$-spectral domain filtering framework, we introduce the joint $\SO$-spectral domain representation using directional spatially localized spherical harmonic transform (DSLSHT), which, for a signal $f\in\lsph$, bandlimited to degree $L_f$, is defined as~\cite{Khalid:2013DSLSHT}
\begin{equation}
g_f(\rho;u) = \int_{\untsph} f(\unit{x}) (\D(\rho)h)(\unit{x}) \overline{Y_u(\unit{x})} ds(\unit{x}),
\label{eq:DSLSHT}
\end{equation}
where $u = v(v+1)+w$ for $0 \le v \le L_f-1$, $-v \le w \le v$, $h \in \lsph$ is the window signal  bandlimited to degree $L_h$ and is required to be spatially concentrated with in some region on the sphere to provide spatial localization for the signal $f$, $\D(\rho)$ is the rotation operator and $g_f(\rho;u)$ is called the DSLSHT representation of the signal $f$. Using \eqref{eq:Fourier_expansion} and the spectral representation of the rotated signal, given by $\innerpS{\D h}{Y_{\ell}^m} = \sum\limits_{m'=-\ell}^{\ell} D^{\ell}_{m,m'}(\varphi,\vartheta,\omega)(h)_{\ell}^{m'}$~\cite{Kennedy-book:2013}, we can rewrite the DSLSHT representation in \eqref{eq:DSLSHT} as
\begin{align}
g_f(\rho;u) &= \sum_{n=0}^{N_f} (f)_n \psi_{u,n}(\rho),
\label{eq:DSLSHT2}
\end{align}
where $N_f = L^2_f-1$,
\begin{align}
\psi_{u,n}(\rho) \dfn \sum_{p = 0}^{L_h-1}\sum_{q=-p}^p\sum_{q'=-p}^p D^p_{q,q'}(\rho) (h)_p^{q'} T(n;p,q;u),
\label{eq:psi}
\end{align}
and $T(n;p,q;u) = \displaystyle\int_{\untsph} Y_n(\unit{x}) Y_p^q(\unit{x}) \overline{Y_u(\unit{x})} ds(\unit{x})$ is the spherical harmonic triple product~\cite{Kennedy-book:2013}. From \eqref{eq:DSLSHT2}, \eqref{eq:psi} and the definition of spherical harmonic triple product, we note that the bandlimit of $g_f(\rho;u)$ in $\rho$ and $u$ is given by $L_h$ and $L_g = L_f+L_h-1$ respectively.

\subsection{Joint $\SO$-Spectral Domain Filter Design}
\label{sec:filter}
We define the joint $\SO$-spectral domain filter function as
\begin{align}
\zeta(\rho;u) = \sum_{p=0}^{L_{\zeta_u}-1} \sum_{q=-p}^p \sum_{q'=-p}^p \big(\zeta(\cdot;u)\big)^{p}_{q,q'}\,\, D^{p}_{q,q'}(\rho),
\label{eq:zeta_comp}
\end{align}
for $u=0,1,\ldots,N_g\!= \!L^2_g-1$. Filtering in the joint domain is carried out by convolving the DSLSHT representation of the signal, $g_f(\rho;u)$, with the joint $\SO$-spectral domain filter function $\zeta(\rho;u)$ for each spectral component $u$, i.e.,
\begin{align}
\nu(\rho;u) \!=\! \sum_{p=0}^{L_h-1} \sum_{q,q'=-p}^p \sum_{k=-p}^p \big(g_f(\cdot;u)\big)^p_{k,q'} \big(\zeta(\cdot;u)\big)^p_{q,k}  D^p_{q,q'}(\rho),
\label{eq:filtered_distribution}
\end{align}
where $\nu(\rho;u)$ is called the filtered representation and we have used the definition of convolution of $\SO$ signals given in~\cite{Kostelec:2008}. Moreover, we have used the fact that bandlimit of $g_f(\rho;u)$ in $\rho$ is $L_h$ and have assumed, without loss of generality, that each filter component, i.e., $\zeta(\rho;\cdot)$, is also bandlimited in $\rho$ to $L_h$. Filter function in \eqref{eq:zeta_comp} is obtained by minimizing the following mean square error in the joint $\SO$-spectral domain
\begin{align}
\mse = \E \left\{ \sum_{u=0}^{N_g} \normSO{\nu(\rho;u) - g_s(\rho;u)}^2 \right\},
\label{eq:mse}
\end{align}
where $N_g = L_g^2-1$ and $g_s(\rho;u)$ is the source signal DSLSHT representation. We present the results in the following theorem.
\begin{theorem}
	\label{th:theorem1}
	Let $f(\unit{x}) = s(\unit{x}) + z(\unit{x})$ be a noise-contaminated random observation on the sphere, where $s(\unit{x})$ is a realization of an anisotropic random process of interest, called the source signal, and $z(\unit{x})$ is a realization of an anisotropic, zero-mean random process, representing the noise signal. Assuming that the source and noise signals are uncorrelated with known spectral covariance matrices, denoted by $\B{C}^s$ and $\B{C}^z$ respectively, the joint $\SO$-spectral domain filter, which minimizes the mean square error defined in \eqref{eq:mse}, is obtained by inverting the following linear system	
	\begin{align}
		\B{A}(p,u) \B{F}(p,q,u) = \B{b}(p,q,u),
		\label{eq:zeta_final}
	\end{align}
	for $0 \le p \le L_h-1, \, |q| \le p, \, u \le N_g = L_g^2-1$, where $\B{F}(p,q,u)$ is a column vector of size $(2p+1)$, with elements given by $F_k = \big(\zeta(\cdot;u)\big)^p_{q,k}, \, |k| \le p$. Elements of the matrix $\B{A}$ and column vector $\B{b}$ are given by
	\begin{align}
		A_{k',k} = \sum\limits_{n=0}^{N_f}\sum\limits_{n'=0}^{N_f} T(n;p,k;u) \overline{T(n';p,k';u)} (C^s_{nn'} + C^z_{nn'}),
		\label{eq:A_elements}
	\end{align}
	\begin{align}
		b_{k'} = \sum\limits_{n=0}^{N_f} \sum\limits_{n'=0}^{N_f} T(n;p,q;u) \overline{T(n';p,k';u)} C^s_{nn'},
		\label{eq:b_elements}
	\end{align}
	respectively, for $|k|, |k'| \le p$. $C^s_{nn'} = \E\{(s)_n\overline{(s)_{n'}}\}$ and $C^z_{nn'} = \E\{(z)_n\overline{(z)_{n'}}\}$ are the elements of $\B{C}^s$ and $\B{C}^z$ respectively.
\end{theorem}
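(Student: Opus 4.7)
The plan is to apply Parseval's identity on $\SO$ to rewrite the mean square error \eqref{eq:mse} as a weighted sum of squared Wigner-$D$ coefficients of the error $\nu(\rho;u)-g_s(\rho;u)$, exploit the multiplicative structure of the DSLSHT coefficients to decouple the resulting quadratic form across index triples $(p,q,u)$, and obtain \eqref{eq:zeta_final} as the normal equations of each finite-dimensional least-squares subproblem. Using the orthogonality $\innerpSO{D^{p}_{q,q'}}{D^{p'}_{\tilde q,\tilde q'}}=\frac{8\pi^2}{2p+1}\delta_{pp'}\delta_{q\tilde q}\delta_{q'\tilde q'}$, Parseval converts each summand of \eqref{eq:mse} into $\sum_{p,q,q'}\frac{8\pi^2}{2p+1}\bigl|\big(\nu(\cdot;u)\big)^p_{q,q'}-\big(g_s(\cdot;u)\big)^p_{q,q'}\bigr|^2$.

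Next I would compute the three sets of Wigner-$D$ coefficients. Equations \eqref{eq:DSLSHT2}--\eqref{eq:psi} immediately give $\big(g_s(\cdot;u)\big)^p_{q,q'}=(h)_p^{q'}\sum_n (s)_n T(n;p,q;u)$ and the analogous expression for $g_f$ with $(f)_n$ in place of $(s)_n$, while \eqref{eq:filtered_distribution} yields $\big(\nu(\cdot;u)\big)^p_{q,q'}=\sum_k \big(g_f(\cdot;u)\big)^p_{k,q'}\big(\zeta(\cdot;u)\big)^p_{q,k}$. The crucial observation is that $(h)_p^{q'}$ factors out of $\big(\nu(\cdot;u)\big)^p_{q,q'}-\big(g_s(\cdot;u)\big)^p_{q,q'}$, so after squaring and summing over $q'$ the only $q'$-dependence is absorbed into the positive scalar $\sum_{q'}|(h)_p^{q'}|^2$, which is independent of the filter. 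The MSE therefore decomposes into independent subproblems indexed by $(p,q,u)$, each a quadratic in the $2p+1$ unknowns $\beta_k\dfn\big(\zeta(\cdot;u)\big)^p_{q,k}$ for $|k|\le p$.

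For each such subproblem I would apply Wirtinger calculus: setting the derivative of the expected squared residual with respect to $\overline{\beta_{k'}}$ to zero, and invoking the uncorrelatedness identities $\E\{(f)_n\overline{(f)_{n'}}\}=C^s_{nn'}+C^z_{nn'}$ and $\E\{(s)_n\overline{(f)_{n'}}\}=C^s_{nn'}$ from \secref{sec:PS}, yields exactly $\sum_k A_{k',k}\beta_k=b_{k'}$ with $A_{k',k}$ and $b_{k'}$ as in \eqref{eq:A_elements}--\eqref{eq:b_elements}; assembling these for $|k|,|k'|\le p$ gives \eqref{eq:zeta_final}. The main obstacle is purely notational --- tracking four index families $(p,q,q',k)$ through the triple products $T(n;p,q;u)$ --- rather than mathematical; once the $(h)_p^{q'}$ factorisation that separates the problem in $q'$ is spotted, the remainder is a routine multichannel Wiener derivation, and Hermitian positive semidefiniteness of $\B{A}(p,u)$, inherited from the observation covariance $\B{C}^s+\B{C}^z$, ensures the generic solvability of \eqref{eq:zeta_final}.
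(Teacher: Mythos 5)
Your proposal is correct and follows essentially the same route as the paper's proof: expand the $\SO$ norms via Wigner-$D$ orthogonality, substitute the coefficient expression $\big(g_d(\cdot;u)\big)^p_{q,q'}=(h)_p^{q'}\sum_n(d)_nT(n;p,q;u)$, and set the Wirtinger derivative with respect to $\overline{\big(\zeta(\cdot;u)\big)^p_{q,k'}}$ to zero using the uncorrelatedness of signal and noise. Your explicit observation that $(h)_p^{q'}$ factors out so the $q'$-sum collapses to the filter-independent scalar $\sum_{q'}|(h)_p^{q'}|^2$ is a correct and welcome clarification of why $\B{A}$ and $\B{b}$ carry no $q'$ dependence, a point the paper leaves implicit.
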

\begin{proof}
	Using \eqref{eq:WignerD_expansion}, the expression for filtered representation in \eqref{eq:filtered_distribution} and the orthogonality of Wigner-$D$ functions on the $\SO$ rotation group~\cite{Kennedy-book:2013}, mean square error in \eqref{eq:mse} can be written as
	\begin{align}
	&\mse = \sum_{u=0}^{N_g} \sum_{p=0}^{L_h-1} \sum_{q,q'=-p}^p \left(\frac{8\pi^2}{2p+1}\right) \times \nonumber \\
	& \E\Bigg\{ \Bigg(\sum_{k=-p}^p \big(g_f(\cdot;u)\big)^p_{k,q'} \big(\zeta(\cdot;u)\big)^p_{q,k} - \big(g_s(\cdot;u)\big)^p_{q,q'}\Bigg) \times \nonumber \\
	& \overline{\Bigg(\sum_{k'=-p}^p \big(g_f(\cdot;u)\big)^p_{k',q'} \big(\zeta(\cdot;u)\big)^p_{q,k'} - \big(g_s(\cdot;u)\big)^p_{q,q'} \Bigg)} \Bigg\},
	\label{eq:mse2}
	\end{align}
	where for a signal $d\in\lsph$ with bandlimit $L_d$ such that $N_d = L^2_d-1$,
	\begin{align}
	\big(g_d(\cdot;u)\big)^p_{q,q'} = \sum_{n=0}^{N_d}(d)_n (h)_p^{q'} T(n;p,q;u).
	\label{eq:G}
	\end{align}
	Substituting \eqref{eq:G} in \eqref{eq:mse2}, setting the derivative of the resulting expression for $\mse$ with respect to $\overline{\big(\zeta(\cdot;u)\big)^p_{q,k'}}$ equal to zero and noting the fact that signal and noise are uncorrelated, we obtain a linear system which, using \eqref{eq:A_elements} and \eqref{eq:b_elements}, can be cast in the matrix form given in \eqref{eq:zeta_final}\footnote{We note that the linear system in \eqref{eq:zeta_final} becomes ill-conditioned for certain values of $p,q$ and $u$, in which case we use the Moore-Penrose pseudo-inverse to obtain the filter coefficients $\B{F}(p,q,u)$.}.
\end{proof}

\subsection{Signal Estimation}
\label{sec:sig_est}
The filtered representation $\nu(\rho;u)$ may not be an admissible DSLSHT representation, i.e., there may not exist a signal $\tilde{s} \in \lsph$ such that $g_{\tilde{s}}(\rho;u) = \nu(\rho;u)$. As a result, we cannot use the inverse DSLSHT~\cite{Khalid:2013DSLSHT} to obtain the source signal estimate from $\nu(\rho;u)$. We present a least square estimate of the source signal in the following theorem.
\begin{theorem}
	\label{th:theorem2}
	Let $g_f(\rho;u)$ be the DSLSHT representation of the noise-contaminated random signal on the sphere which is filtered in the joint $\SO$-spectral domain using the filter coefficients obtained from \eqref{eq:zeta_final}, resulting in a filtered representation $\nu(\rho;u)$ given in \eqref{eq:filtered_distribution}. By minimizing the following squared error
	\begin{align}
	\se = \sum\limits_{u=0}^{N_g} \, \normSO{\nu(\rho;u) - g_{\tilde{s}}(\rho;u)}^2,	\quad N_g = L^2_g-1,
	\label{eq:se}
	\end{align}
	an estimate of the source signal, denoted by $\tilde{s}(\unit{x})$, can be obtained by solving the following linear system
	\begin{align}
	\B{\tilde{s}} = \bv{\Upsilon} \B{f},
	\label{eq:opt_prob_sol}
	\end{align}
	where $\B{\tilde{s}}$ and $\B{f}$ are the column vectors containing spectral coefficients of the estimate $(\tilde{s})_n$ and the noise-contaminated observation $(f)_n$ respectively for $0 \le n \le N_f=L^2_f-1$, and $\bv{\Upsilon}$ is an $L_f^2\times L_f^2$ matrix with elements given by
	\begin{align}
		\Upsilon_{n,n'} &= \frac{4\pi}{\innerpS{h}{h}} \sum\limits_{u=0}^{N_g} \sum\limits_{p=0}^{L_h-1} \frac{1}{(2p+1)} \sum\limits_{q=-p}^p \left(\sum\limits_{q'=-p}^p |(h)_p^{q'}|^2\right) \times \nonumber \\
		& \quad \sum_{k=-p}^p \big(\zeta(\cdot;u)\big)^p_{q,k} \overline{T(n;p,q;u)} T(n';p,k;u).
		\label{eq:upsilon}
	\end{align}
\end{theorem}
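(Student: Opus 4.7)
The plan is to minimize the quadratic form $\se$ in the unknown spectral coefficients $(\tilde{s})_n$ and derive the normal equations, then show that by virtue of Parseval's relation and the spherical harmonic addition theorem, the Gram matrix arising from these normal equations collapses to a scalar multiple of the identity, yielding the closed-form expression \eqref{eq:upsilon}.

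First, I would apply the Wigner-$D$ orthogonality on $\SO$ to rewrite $\se$ as
\begin{equation*}
\se = \sum_{u=0}^{N_g}\sum_{p=0}^{L_h-1}\sum_{q,q'=-p}^{p}\frac{8\pi^2}{2p+1}\left|\big(\nu(\cdot;u)\big)^p_{q,q'}-\big(g_{\tilde{s}}(\cdot;u)\big)^p_{q,q'}\right|^2,
\end{equation*}
exactly as in \eqref{eq:mse2}. The coefficients $\big(g_{\tilde{s}}(\cdot;u)\big)^p_{q,q'}$ are given by \eqref{eq:G} with $d=\tilde{s}$, and they depend linearly on the unknowns $(\tilde{s})_n$, while the filtered-representation coefficients $\big(\nu(\cdot;u)\big)^p_{q,q'}$ are determined (through \eqref{eq:filtered_distribution} and \eqref{eq:G} with $d=f$) entirely by the observed data $(f)_n$ and the pre-computed filter coefficients $\big(\zeta(\cdot;u)\big)^p_{q,k}$.

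Next, I would differentiate $\se$ with respect to $\overline{(\tilde{s})_n}$ for each $0\le n\le N_f$ and set the derivative to zero. Substituting \eqref{eq:G} on both sides, the normal equations take the form
\begin{equation*}
\sum_{n'=0}^{N_f} M_{n,n'}\,(\tilde{s})_{n'} \;=\; \sum_{n'=0}^{N_f} R_{n,n'}\,(f)_{n'},
\end{equation*}
where
\begin{equation*}
M_{n,n'}=\sum_{u,p,q}\frac{1}{2p+1}\Bigl(\sum_{q'}\bigl|(h)_p^{q'}\bigr|^2\Bigr)\overline{T(n;p,q;u)}\,T(n';p,q;u),
\end{equation*}
and $R_{n,n'}$ is precisely the inner sum appearing in \eqref{eq:upsilon} (i.e.\ involving $\sum_k \big(\zeta(\cdot;u)\big)^p_{q,k} T(n';p,k;u)$ in place of $T(n';p,q;u)$).

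The main obstacle, and the step that makes the result non-obvious, is simplifying $M_{n,n'}$. I would treat $T(n;p,q;u)$ as the $Y_u$ spectral coefficient of the pointwise product $Y_n Y_p^q$; since the bandlimit $L_g$ is chosen so that the $u$-sum is complete for this product, Parseval's theorem gives
\begin{equation*}
\sum_{u} T(n';p,q;u)\overline{T(n;p,q;u)} \;=\; \int_{\untsph} Y_{n'}(\unit{x})\,\bigl|Y_p^q(\unit{x})\bigr|^2\,\overline{Y_n(\unit{x})}\,ds(\unit{x}).
\end{equation*}
Summing over $q$ and invoking the addition theorem $\sum_{q=-p}^{p}|Y_p^q(\unit{x})|^2=(2p+1)/(4\pi)$ removes the $p$-dependence of the angular factor, and the remaining spherical harmonic orthogonality produces $\delta_{n,n'}$. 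What is left is $\sum_p\sum_{q'}|(h)_p^{q'}|^2/(4\pi)=\innerpS{h}{h}/(4\pi)$, so $M_{n,n'}=\innerpS{h}{h}/(4\pi)\,\delta_{n,n'}$. The matrix $\mathbf{M}$ is therefore trivially invertible, and dividing through yields $(\tilde{s})_n=(4\pi/\innerpS{h}{h})\sum_{n'} R_{n,n'}(f)_{n'}$, which matches \eqref{eq:opt_prob_sol} with $\Upsilon_{n,n'}$ as stated in \eqref{eq:upsilon}. Routine bookkeeping of the index relabeling (swapping the dummy indices between the derivative and the substituted linear form) completes the proof.
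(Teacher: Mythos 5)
Your proposal is correct and follows the same overall strategy as the paper: form the least-squares normal equations in the unknowns $(\tilde{s})_n$, show that the Gram matrix $\sum_{u}\int_{\SO}\psi_{u,n'}\overline{\psi_{u,n}}\,d\rho$ is a scalar multiple of the identity, and then read off $\bv{\Upsilon}$ from the right-hand side. The one place where you genuinely diverge is the diagonalization of the Gram matrix. The paper does this algebraically, writing the triple products in terms of Wigner-$3j$ symbols and invoking two $3j$ orthogonality/sum identities to land on $2\pi\innerpS{h}{h}\,\delta_{n,n'}$. You instead observe that $T(n;p,q;u)$ is the $u$-th spherical harmonic coefficient of the product $Y_nY_p^q$, which is bandlimited to degree $L_g-1$, so the $u$-sum up to $N_g=L_g^2-1$ is complete and Parseval converts it into $\int_{\untsph}Y_{n'}|Y_p^q|^2\overline{Y_n}\,ds$; the addition theorem $\sum_q|Y_p^q|^2=(2p+1)/(4\pi)$ then kills the $p$-dependence and orthonormality of the $Y_n$ gives $\delta_{n,n'}$. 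The two computations agree (your $\innerpS{h}{h}/(4\pi)$ versus the paper's $2\pi\innerpS{h}{h}$ differ exactly by the common $8\pi^2/(2p+1)\cdot(2p+1)$ factor you consistently dropped from both sides, so it cancels), and your final $4\pi/\innerpS{h}{h}$ prefactor and the form of $R_{n,n'}$ match \eqref{eq:upsilon}. Your route buys a more transparent, coordinate-free explanation of \emph{why} the Gram matrix is diagonal — it makes explicit that the choice $L_g=L_f+L_h-1$ is precisely what makes the $u$-sum complete — whereas the paper's $3j$-symbol manipulation is more mechanical but self-contained at the level of standard identities. Both are valid.
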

\begin{proof}
	Rewriting the squared error in \eqref{eq:se} using \eqref{eq:DSLSHT2} and setting the derivative of the resulting expression with respect to $(\tilde{s})_n$ equal to zero, we get the following relation
	\begin{align}
	&\sum_{n'=0}^{N_f} \left( \sum_{u=0}^{N_g} \displaystyle \int_{\SO} \psi_{u,n'}(\rho) \overline{\psi_{u,n}(\rho)} d\rho \right) (\tilde{s})_{n'}  = \nonumber \\
	& \qquad \sum_{u=0}^{N_g} \displaystyle \int_{\SO} \nu(\rho;u) \overline{\psi_{u,n}(\rho)} d\rho, \quad  0 \le n \le N_f.
	\label{eq:opt_prob_step1}
	\end{align}
	From the definition of $\psi_{u,n}(\rho)$ in \eqref{eq:psi}, we can write
	\begin{align}
	& \sum_{u=0}^{N_g} \displaystyle \int_{\SO} \psi_{u,n'}(\rho) \overline{\psi_{u,n}(\rho)} d\rho = \sum_{u=0}^{N_g} \sum_{p = 0}^{L_h-1} \frac{8\pi^2}{(2p+1)} \times \nonumber \\
	& \qquad \qquad  \sum_{q=-p}^p\sum_{q'=-p}^p |(h)_p^{q'}|^2 T(n';p,q;u) T(n;p,q;u),
	\label{eq:opt_prob_step2}
	\end{align}
	where we have used the orthogonality of Wigner-$D$ functions on the $\SO$ rotation group. Using the conjugate symmetry of spherical harmonics\footnote{Conjugate symmetry: $\overline{Y_{\ell}^m(\unit{x})} = (-1)^m Y_{\ell}^{-m}(\unit{x})$.}, the definition of spherical harmonic triple product in terms of Wigner-$3j$ symbols~\cite{Kennedy-book:2013}, which are represented by $\left(
	\begin{matrix}
	\ell & p & v \\
	m & q & w \\
	\end{matrix}
	\right)$, and the following relations
	\begin{align*}
	\sum_{w=-v}^v \sum_{q=-p}^p \left(
	\begin{matrix}
	\ell & p & v \\
	m & q & -w \\
	\end{matrix}
	\right)
	\left(
	\begin{matrix}
	\ell' & p & v \\
	m' & q & -w \\
	\end{matrix}
	\right) = \frac{\delta_{\ell,\ell'}\delta_{m,m'}}{(2\ell+1)},
	\end{align*}
	\begin{align*}
	\sum_{v=0}^{L_g-1} (2v+1) \left(
	\begin{matrix}
	\ell & p & v \\
	0 & 0 & 0 \\
	\end{matrix}
	\right)^2 = 1,
	\end{align*}
	the right hand side of \eqref{eq:opt_prob_step2} can be simplified as $2\pi \innerpS{h}{h}  \delta_{n,n'}$. Hence, \eqref{eq:opt_prob_step1} gives the spectral estimate as
	\begin{align}
	(\tilde{s})_n &= \left(2\pi \innerpS{h}{h} \right)^{-1} \sum\limits_{u=0}^{N_g} \displaystyle \int_{\SO} \nu(\rho;u) \overline{\psi_{u,n}(\rho)} d\rho.
	\label{eq:opt_prob_step5}
	\end{align}
	Simplifying the integral in \eqref{eq:opt_prob_step5} using \eqref{eq:psi} and \eqref{eq:filtered_distribution}, and combining the result with \eqref{eq:G}, we formulate the source signal spectral estimate in the matrix form given in \eqref{eq:opt_prob_sol}.
\end{proof}

\begin{figure*}[!th]
	\vspace{-6mm}
	\centering
	\subfloat[$s(\unit{x})$]{
		\includegraphics[width=0.115\textwidth]{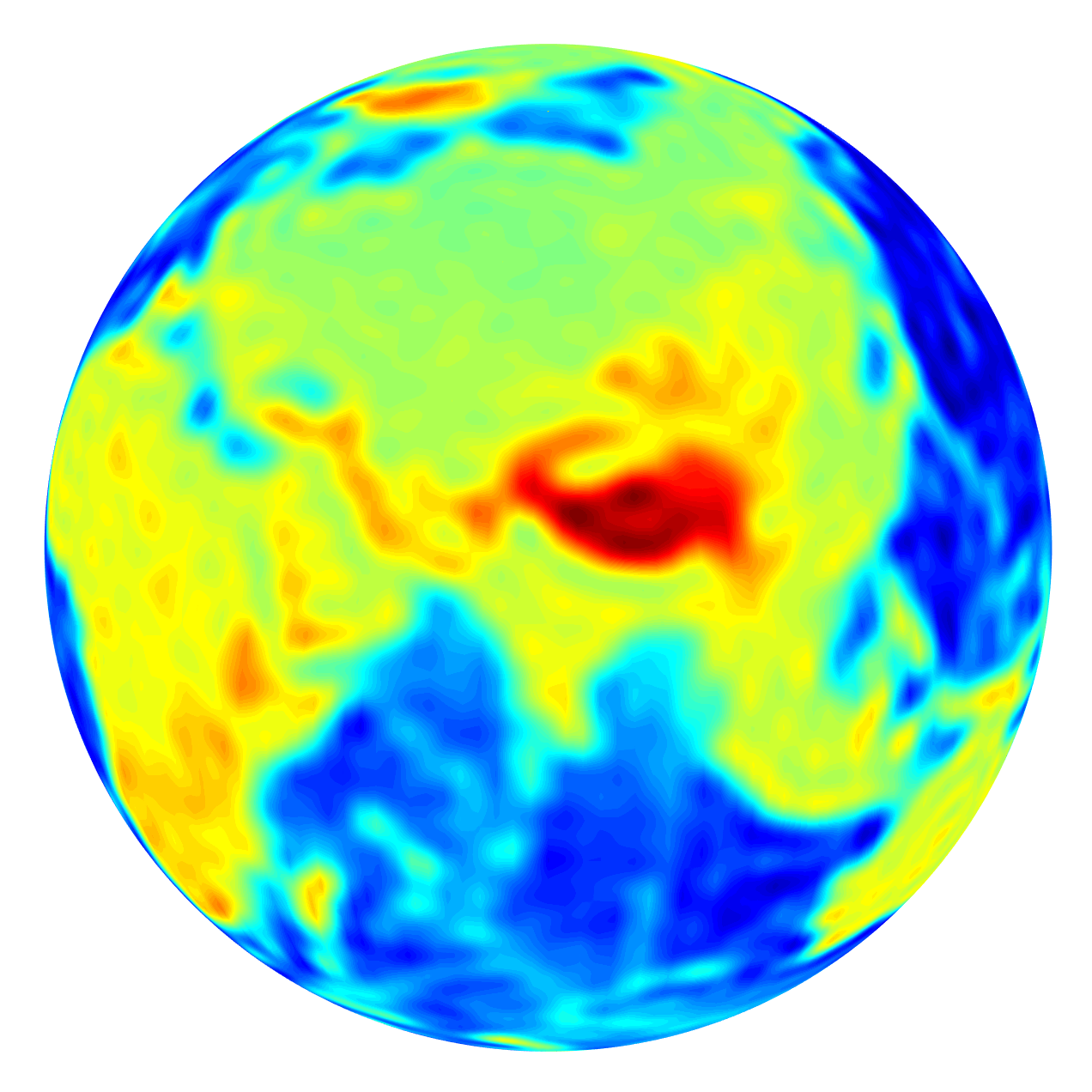}}\hfil
	\subfloat[$z(\unit{x})$]{
		\includegraphics[width=0.115\textwidth]{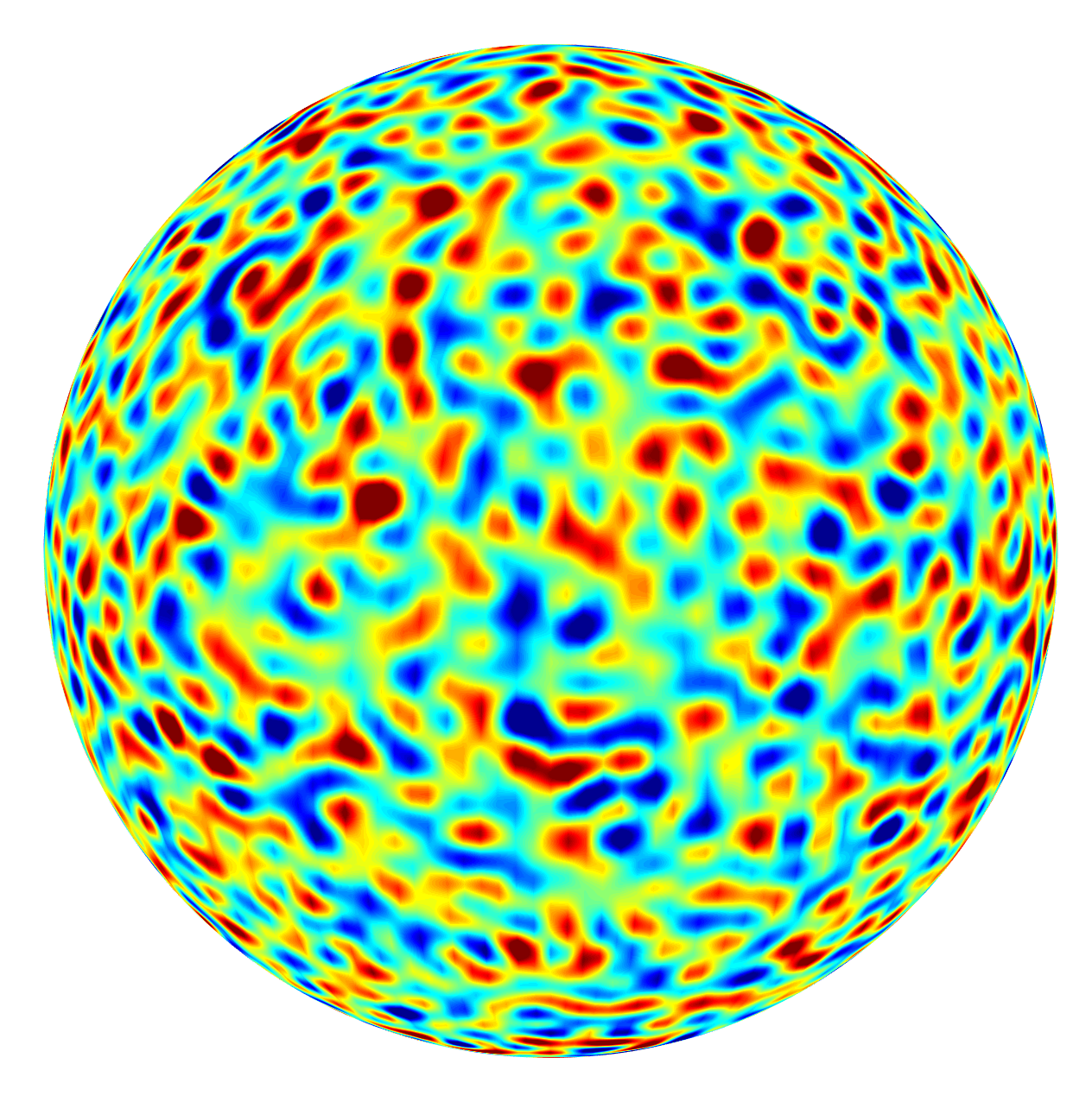}}\hfil	
	\subfloat[$f(\unit{x})$]{
		\includegraphics[width=0.115\textwidth]{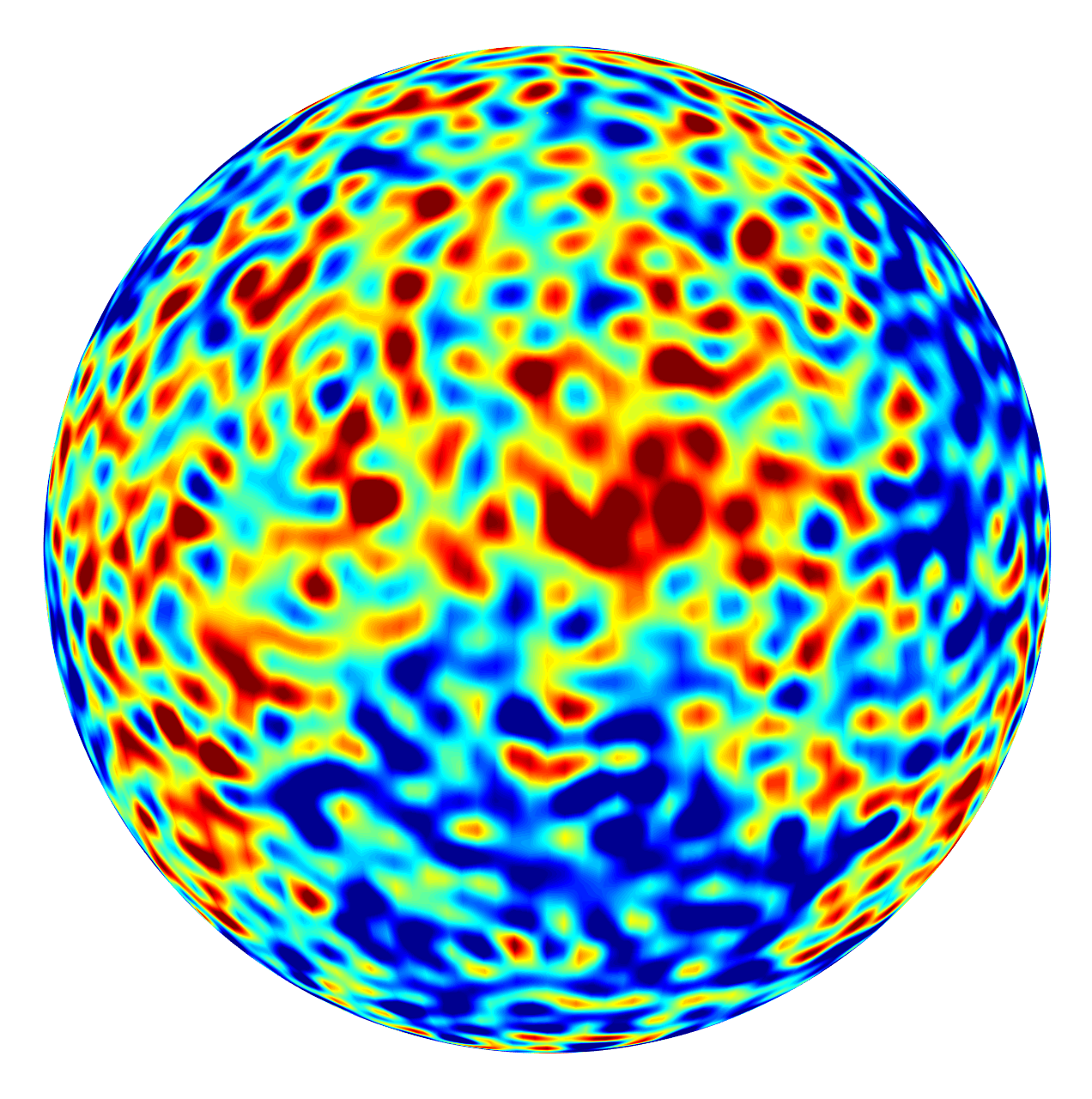}}\hfil
	\subfloat[$\tilde{s}(\unit{x})$]{
		\includegraphics[width=0.115\textwidth]{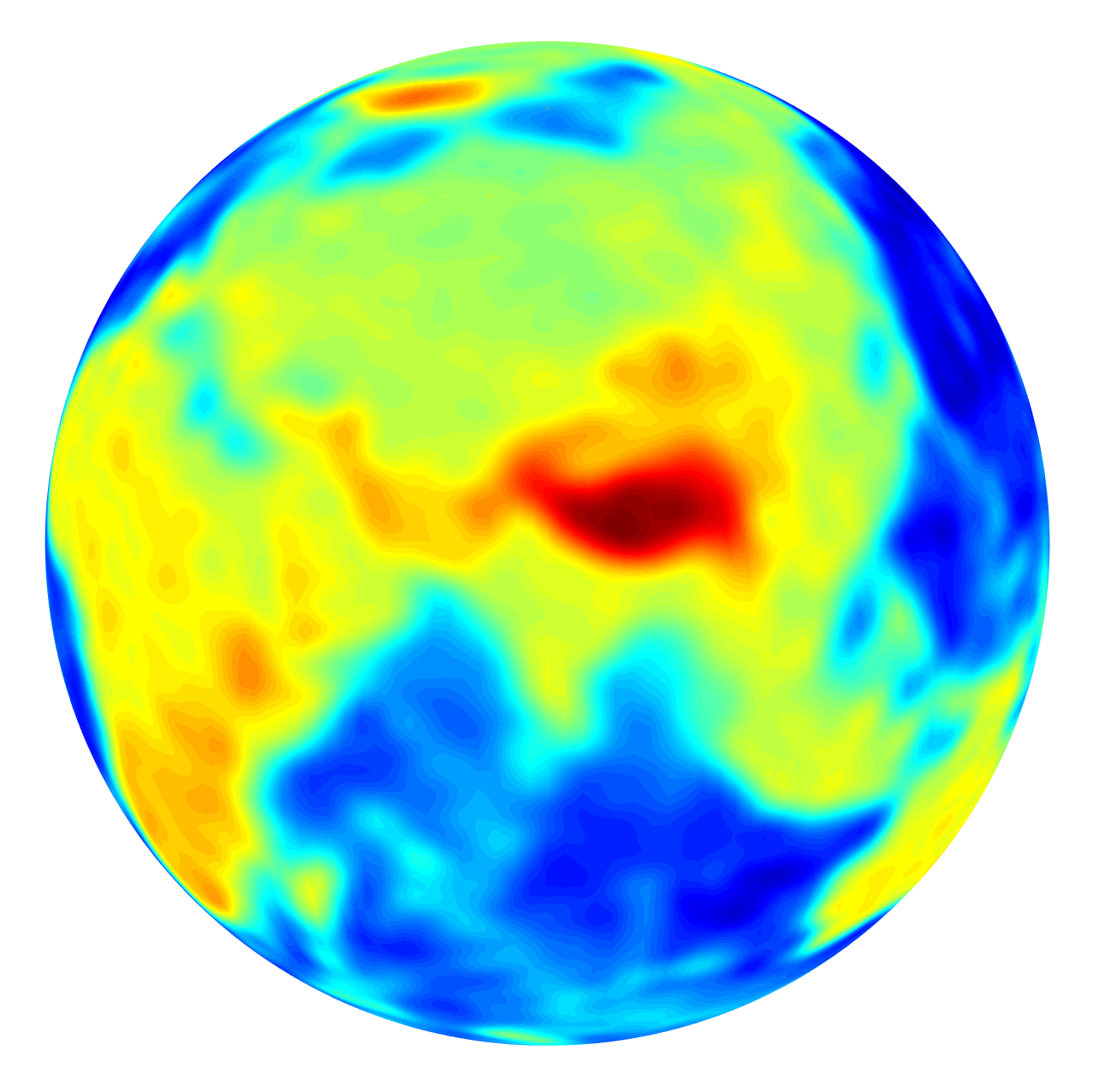}}\hfil
		\subfloat[$\tilde{s}_0(\unit{x})$]{
		\includegraphics[width=0.115\textwidth]{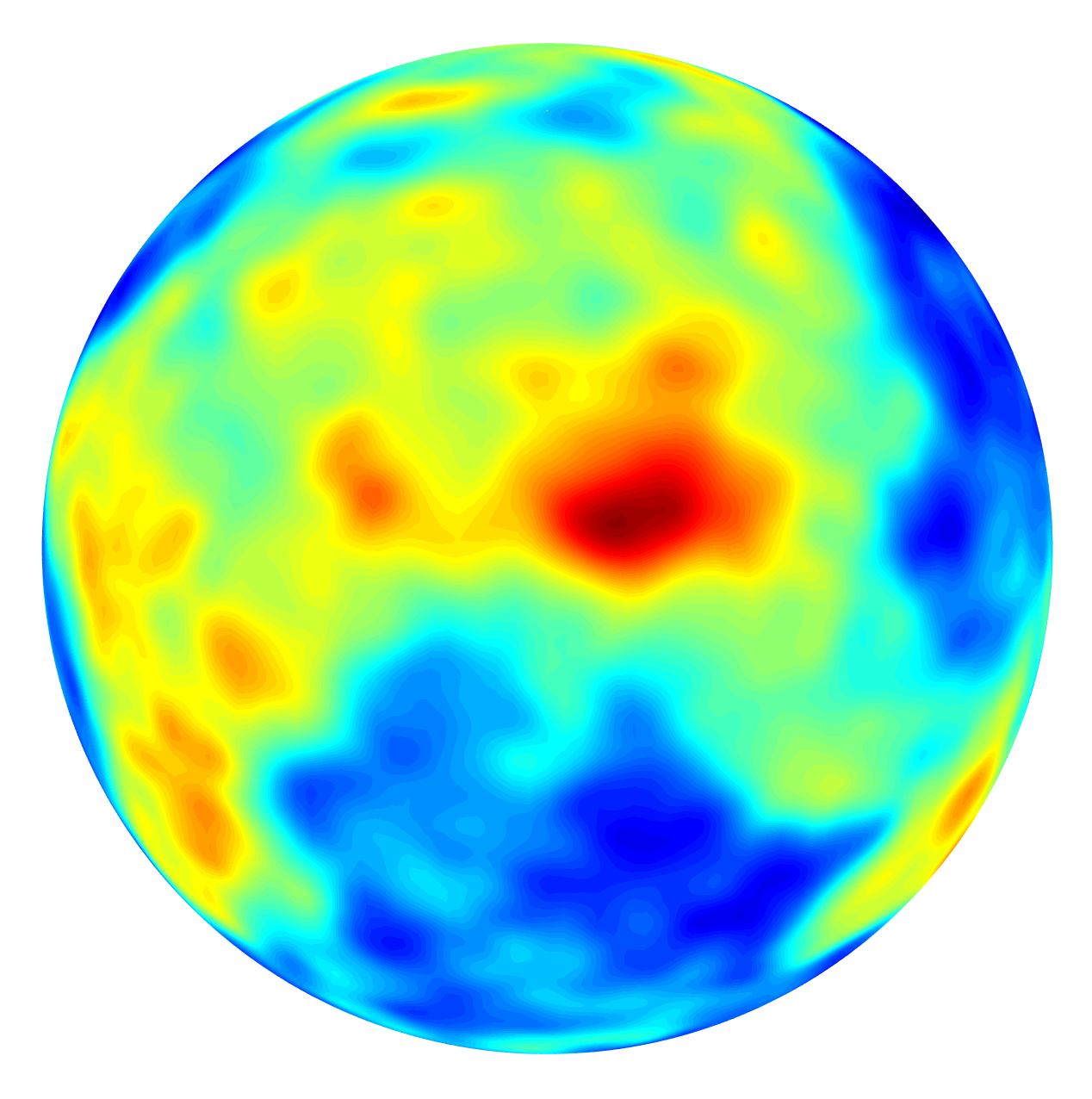}}\hfil
	\subfloat[$|s(\unit{x})|$]{
		\includegraphics[width=0.115\textwidth]{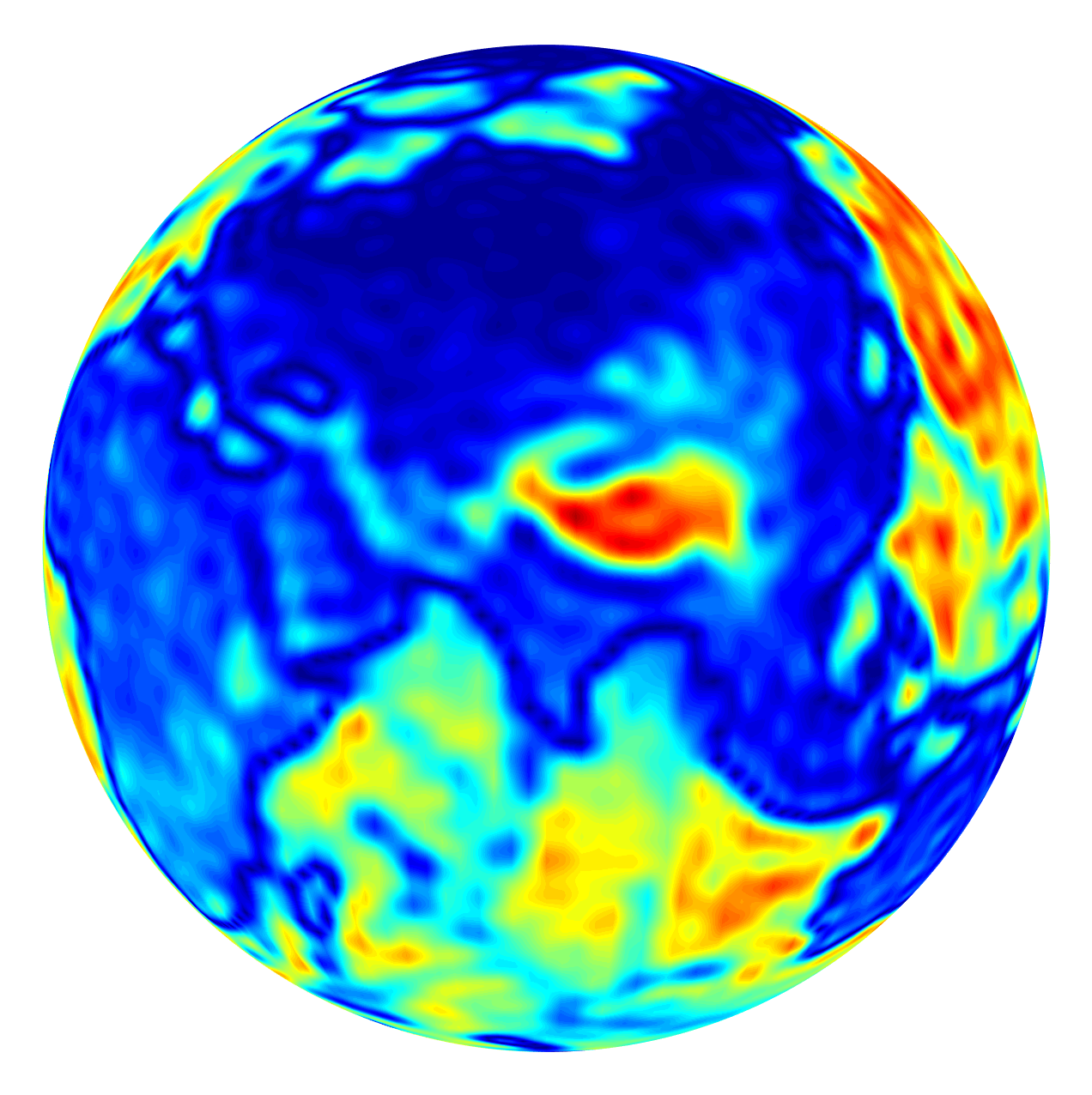}}\hfil
		\subfloat[$|\tilde{s}(\unit{x})|$]{
		\includegraphics[width=0.115\textwidth]{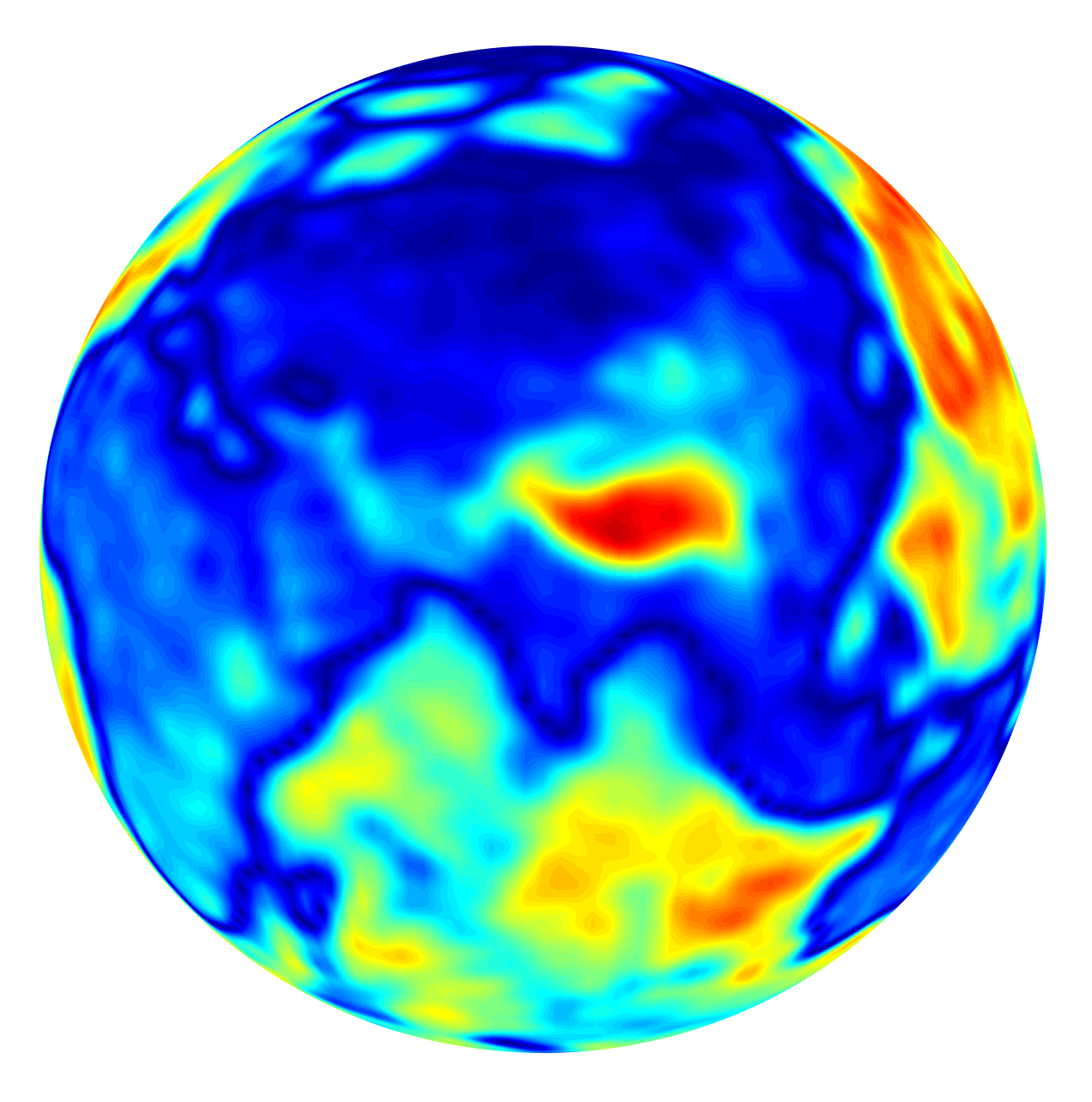}}\hfil
		\subfloat[$|\tilde{s}_0(\unit{x})|$]{
		\includegraphics[width=0.115\textwidth]{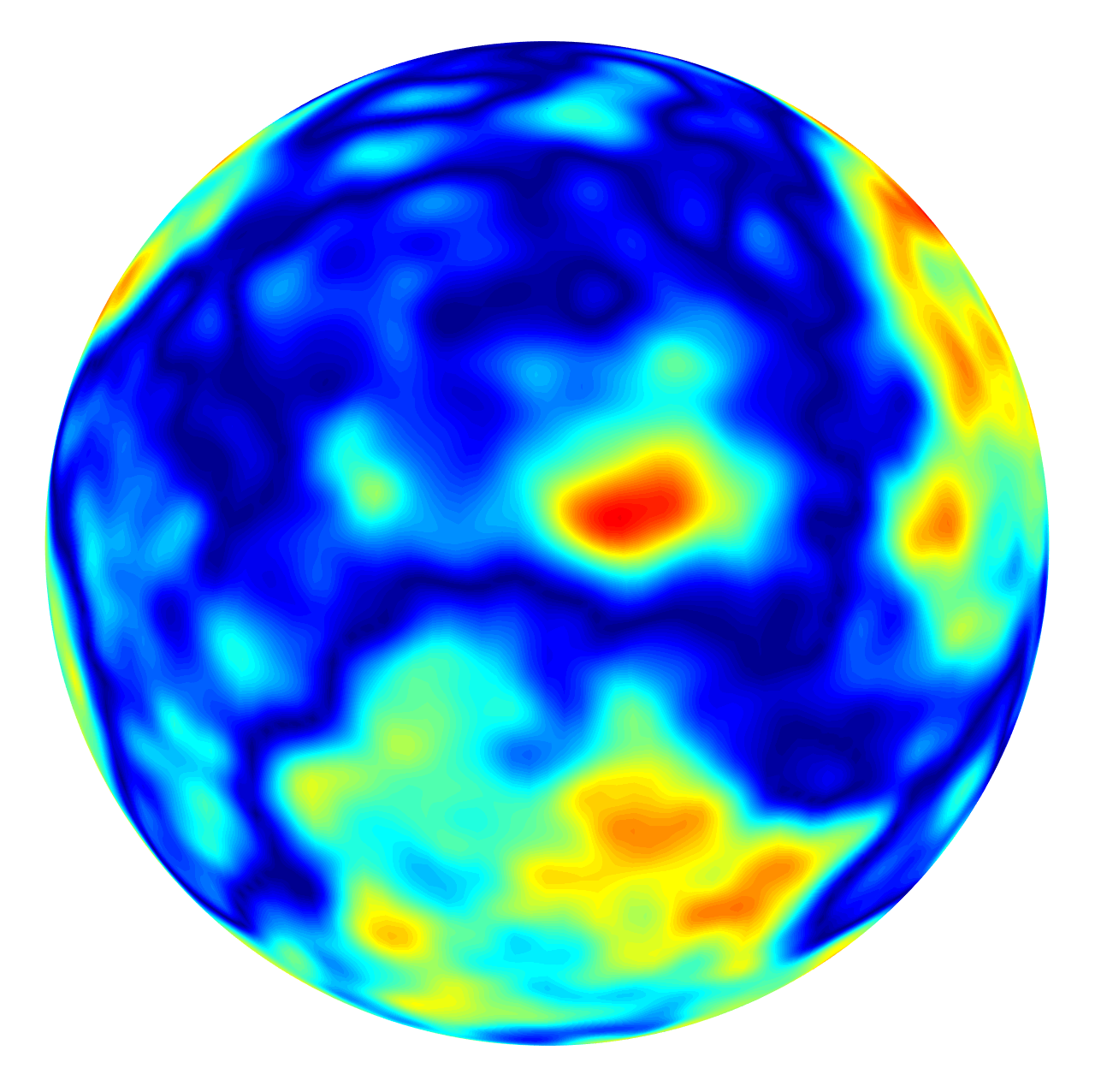}}\hfil
	\vspace{-2mm}
		
	\subfloat{
		\includegraphics[width=0.62\textwidth]{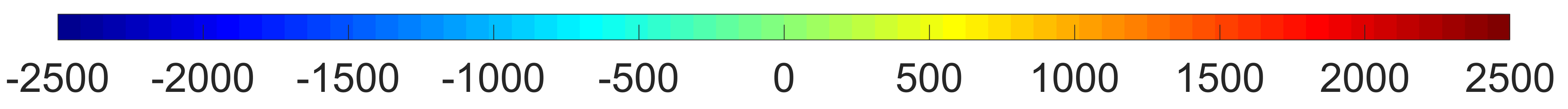}}\hfill
	\subfloat{
		\includegraphics[width=0.36\textwidth]{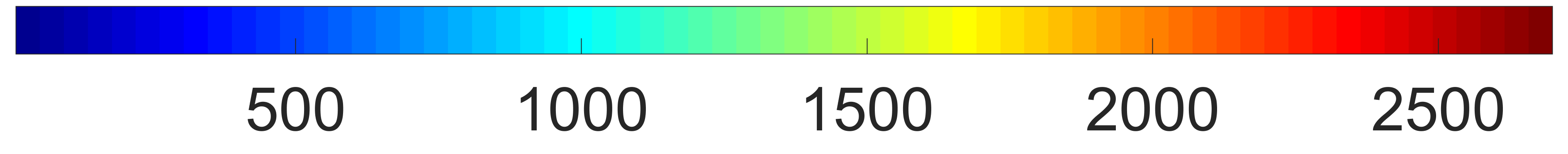}}\hfill
	\caption{\small Joint $\SO$-spectral domain filtering of the Earth topography map~($L_f\!=\!64$), $s(\unit{x})$, contaminated with anisotropic, zero-mean, uncorrelated Gaussian noise, $z(\unit{x})$, at $\snr^f \!= 0.001\!$ dBs. Also shown is the estimate obtained from the joint spatial-spectral domain filtering,  $\tilde{s}_0(\unit{x})$. Joint $\SO$-spectral domain filtering outperforms the joint spatial-spectral domain filtering by $8$ dBs. Moreover, magnitude plots show much better reconstruction of the directional features, such as the dark blue contours marking the boundary between land and water, using the proposed framework.}
	\label{fig:results}
	\vspace{-2mm}
\end{figure*}

\section{Analysis}
\label{sec:ana}
To demonstrate the effectiveness of the joint $\SO$-spectral domain filtering framework, we use the Earth topography map\footnote{http://geoweb.princeton.edu/people/simons/software.html}, bandlimited to $L_f = 64$, as the source signal $s(\unit{x})$ and gauge the performance using the signal to noise ratio~(SNR) defined as
\begin{align}
\snr^d = 20\log\frac{\normS{s(\unit{x})}}{\normS{d(\unit{x}) - s(\unit{x})}}
\label{eq:SNR}
\end{align}
for a signal $d \in \lsph$. Hence, the input and output SNRs are given by $\snr^f$ and $\snr^{\tilde{s}}$ respectively. Spectral covariance matrices for the source and noise signals are constructed as $\B{C}^s = \B{s}\B{s}^{\mathrm{H}}$ and $\B{C}^z = \B{T} \B{T}^{\mathrm{H}}$ respectively, where $\B{s}$ is the column vector containing the spectral coefficients $(s)_n$, elements of the matrix $\B{T}$ are chosen to be uniformly distributed in the interval $(-1,1)$ in both real and imaginary parts, and $(\cdot)^{\mathrm{H}}$ represents the conjugate transpose. We employ the most optimally concentrated Slepian function~\cite{Simons:2006}, computed for an elliptical region of focus colatitude $\theta_C \!=\! 15^{\circ}$ which is centered at the north pole such that the semi-major axis is aligned with the $x$-axis and has a radius $a \!=\! 16^{\circ}$\footnote{We refer the reader to~\cite{Khalid:2013DSLSHT} for the definition of a spherical ellipse.}, as the window signal $h$, with bandlimit $L_h \!=\! 20$, for computing the source signal estimate $\tilde{s}(\unit{x})$. For comparison, we filter the noise-contaminated observation $f(\unit{x})$ using the joint spatial-spectral domain filter in~\cite{Khalid:2013spie}, employing the most optimally concentrated Slepian function, computed for the axisymmetric polar cap region with polar cap angle $\theta_0 = 15^{\circ}$, as the window signal $h_0$~($L_{h_0}=20$), and obtain the estimate as $\tilde{s}_0(\unit{x})$.
\begin{figure}[!t]
	\vspace{-4mm}
	\centering
	\includegraphics[width=0.45\textwidth]{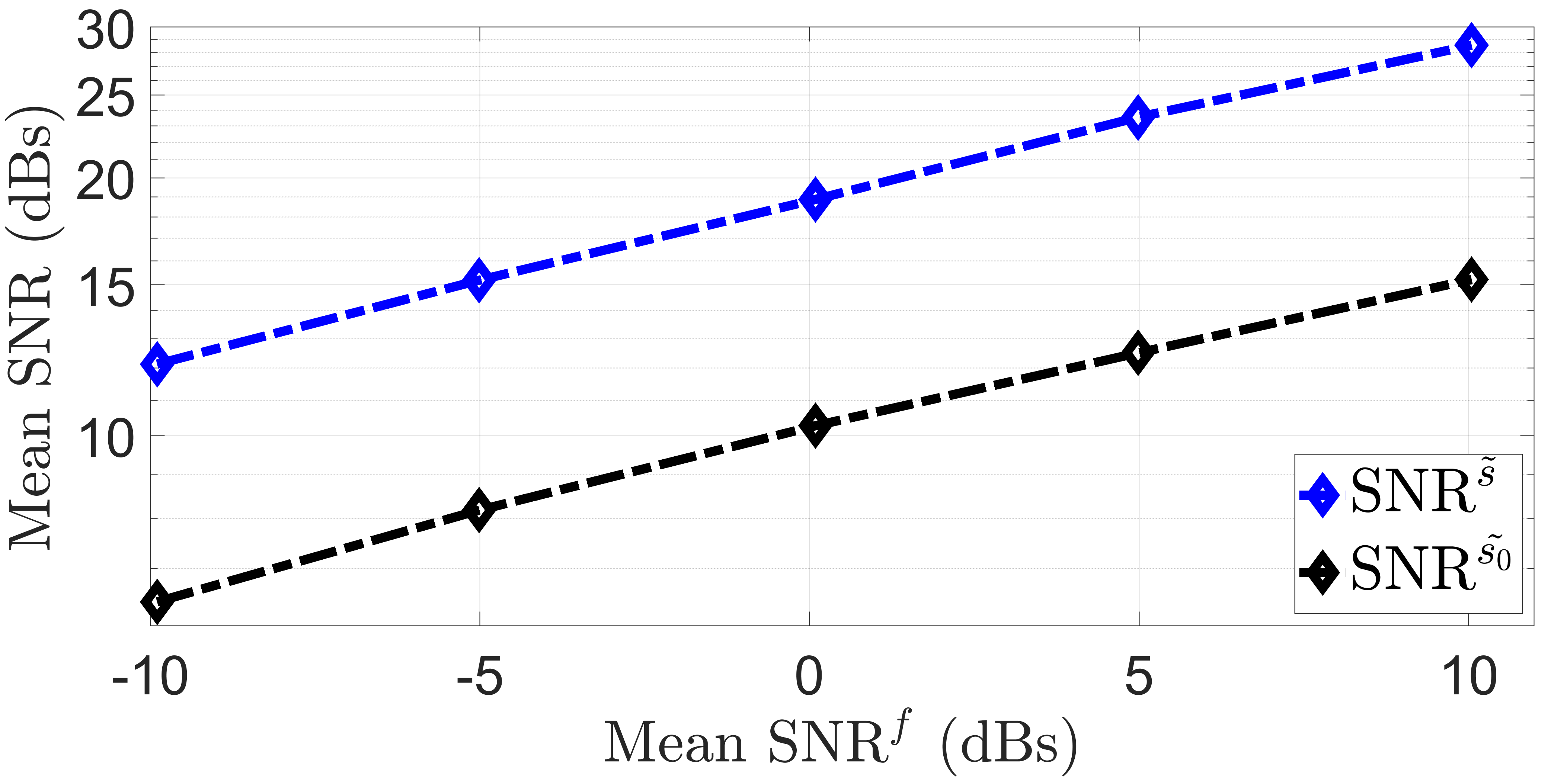}
	\vspace{-1mm}
	\caption{\small Mean output SNR plotted against the mean input SNR for $10$ realizations of the anisotropic, zero-mean, uncorrelated Gaussian noise process. Blue and black curves show the results using the proposed joint $\SO$-spectral domain and the joint spatial-spectral domain filtering frameworks respectively.}
	\vspace{-4mm}
	\label{fig:avg_SNR_trend}	
\end{figure}

As an illustration, we use a realization of an anisotropic, zero-mean, uncorrelated Gaussian noise process, $z(\unit{x})$, to obtain the noise-contaminated observation $f(\unit{x}) = s(\unit{x}) + z(\unit{x})$ such that $\snr^f \! = 0.001\!$ dBs. Output SNR using the proposed framework, i.e., $\snr^{\tilde{s}}$, is measured to be $18.33$ dBs, indicating a significant gain in SNR, compared to the joint spatial-spectral domain filtered estimate which results in $\snr^{\tilde{s}_0} = 10.36$ dBs. As expected, the joint $\SO$-spectral domain filtering framework outperforms the joint spatial-spectral domain filtering framework~(by $8$ dBs) due to its ability to better detect the underlying directional features of the data. The results are shown in \figref{fig:results}, where in addition to better reconstruction using the proposed framework, better estimate of the directional features of the Earth topography map can be observed, e.g., in the dark blue contours marking the boundary between land and water, as depicted in the magnitude plots.

To test the robustness of the proposed framework, we contaminate the Earth topography map with $10$ realizations of anisotropic, zero-mean, uncorrelated Gaussian noise process. We conduct a similar experiment for the joint spatial-spectral domain filtering framework. The results are averaged over realizations and plotted in \figref{fig:avg_SNR_trend} which shows the mean output SNR against the mean input SNR. As can be seen, the joint $\SO$-spectral domain filter performs better, even at severely high noise levels, compared to the joint spatial-spectral domain filter.

\section{Conclusion}
\label{sec:con}
We have presented a framework for the joint $\SO$-spectral domain filtering and estimation of random anisotropic signals contaminated by random anisotropic noise, using the directional spatially localized spherical harmonic transform (DSLSHT). We have designed a filter, which is optimal in the sense of mean square error criterion in the joint $\SO$-spectral domain, for filtering the DSLSHT representation of the noise-contaminated signal, and have proposed a least square solution for the estimate of the underlying (noise-free)~source signal from the filtered representation. We have demonstrated the capability of the proposed filtering framework on the bandlimited Earth topography map in the presence of anisotropic, zero-mean, uncorrelated Gaussian noise, and have shown that the proposed filtering framework performs much better compared to the joint spatial-spectral domain filter.

\bibliography{IEEEabrv,sht_bib}

\end{document}